\def\BState{\State\hskip-\ALG@thistlm}
\def\*#1{\mathbf{#1}}
\def\~#1{\boldsymbol{#1}}
\definecolor{mygreen}{RGB}{28,172,0} 
\definecolor{mylilas}{RGB}{170,55,241}
\newtheorem{theorem}{Theorem}[section]
\newtheorem{lemma}[theorem]{Lemma}
\newtheorem{remark}[theorem]{Remark}
\newtheorem{assumption}[theorem]{Assumption}
\newtheorem{definition}[theorem]{Definition}
\begin{document}
\begin{frontmatter}
\title{Interpretable Selective Learning in Credit Risk}
\author{Dangxing Chen \thanks{Corresponding author. Email: dangxing.chen@dukekunshan.edu.cn}}
\author{Weicheng Ye
}
\author{Jiahui Ye
}
\date{}


\begin{abstract} 


The forecasting of the credit default risk has been an important research field for several decades. Traditionally, logistic regression has been widely recognized as a solution due to its accuracy and interpretability. As a recent trend, researchers tend to use more complex and advanced machine learning methods to improve the accuracy of the prediction. Although certain non-linear machine learning methods have better predictive power, they are often considered to lack interpretability by financial regulators. Thus, they have not been widely applied in credit risk assessment. We introduce a neural network with the selective option to increase interpretability by distinguishing whether the datasets can be explained by the linear models or not. We find that, for most of the datasets, logistic regression will be sufficient, with reasonable accuracy; meanwhile, for some specific data portions, a shallow neural network model leads to much better accuracy without significantly sacrificing the interpretability.

\end{abstract}

\end{frontmatter}

\section{Introduction}


Understanding the credit risk and properly managing the risk has always been a hot topic in the financial industry. By developing reliable credit scoring methods using empirical models, lenders and financial institutions are able to estimate the risk levels and make risk-based decisions to properly hedge the risk of default. More importantly, credit risk has a considerable economic impact globally. For instance, the value of customer credit outstanding in the United States was USD 4,168.43 billion in 2020 (https://www.statista.com/statistics/188170/consumer-credit-liabilities-of-us-households-since-1990/), which was nearly 20 percent of the US GDP for the same year (https://fred.stlouisfed.org/series/GDPA). Both the financial and economic impacts require the lenders and financial institutions to carefully choose risk assessment methods and precisely gauge the credit risks to avoid extreme situations due to underestimated credit risks. 

Credit scoring is a universal risk assessment method leveraging statistical models to determine the credit worthiness of a borrower. In other words, a credit score is a model-based estimate of the probability used by lenders to understand the likelihood that a borrower will default soon. Since defaulting or not defaulting is a binary outcome, researchers and lenders commonly use classification algorithms to estimate the default probability of the borrowers \cite{hand1997statistical}. Among all the classification algorithms, logistic regression is the most popular method in the industry because of its good predictive power and, more importantly, its simplicity. On the other hand, researchers such as our group are continuously working on exploring more complicated methods, such as machine learning, for higher accuracy and better interpretability to help improve the risk management process.

Machine learning methods automatically learn from the data and improve from previous experience. They have been widely used in applications such as computer vision, pattern recognition, and text learning. 
Recently, there have been several efforts made by researchers to apply machine learning methods to credit scoring. Some literature reviews can be found in \cite{baesens2003benchmarking, lessmann2015benchmarking,yeh2009comparisons}.
To mention a few, some techniques have been used in earlier works, including decision trees \cite{srinivasan1987credit}, k-nearest neighbors \cite{henley1997construction}, neural network \cite{yobas2000credit}, and support vector machines \cite{baesens2003benchmarking}. More recently, the adoption of ensemble methods has also shown a great improvement in terms of accuracy \cite{finlay2011multiple,lessmann2015benchmarking,paleologo2010subagging}. Machine learning methods in general have attracted considerable attention from the credit industry \cite{grennepois2018using}. 

Machine learning methods have great flexibility to deal with high-dimensional and highly non-linear datasets. They have improved the accuracy of predicting the probability of default, when properly used \cite{finlay2011multiple}. However, machine learning methods are usually restricted to a black box and it is very difficult to interpret the result. 
This is one of the most significant challenges that researchers and the credit industry are facing because decisions regarding credit applications cannot be made based on discretion.
Financial regulators have enforced the reasoning of institutional and individual credit decisions. 
With the new General Data Protection Regulation (GDPR) \cite{voigt2017eu}, including the ``right to an explanation,” explanations must be provided to justify the application decisions. 
In particular, if an application for a credit card is rejected, \text{justification of the rejection must be provided to borrowers and regulators.} A black-box machine learning technique can be hardly accepted without explanations. 
Hence, there has been an increasing trend in the machine learning community to improve the interpretability of the machine learning models \cite{chen2018interpretable,dash2018boolean,gomez2020vice,horel2020significance,dumitrescu2022machine}.

During our research, we have found that we could use a selective framework choosing between traditional credit scoring models and machine learning methods to improve the interpretability. In statistics, there is a term called prediction with a rejection option. It is introduced as selective prediction, similar to the self-awareness of knowing what we do not know.
The concept of the reject option can be traced back to Chow \cite{chow1970reject}, and has been extensively studied for various hypothesis classes and learning algorithms, such as support vector machine, boosting, and nearest neighbors \cite{fumera2002,hellman1970,cortes2016}.
More recently, Geifman and El-Yaniv extended the concept of the reject option to neural networks, which could leverage the neural network's advantage in data fitting and error-reject trade-off \cite{elyaniv2019selnet}. 
They introduced Selective Net, a neural network embedded with a reject option, which allows the end-to-end optimization of selective models \cite{elyaniv2019selnet}.
Selective Net has also been extended to have an adaptive rejection option with the optimal rejection threshold searching \cite{ye2022}.
The main motivation for selective prediction is to reduce the error rate by abstaining from prediction while keeping coverage as high as possible. Due to the uncertainty in image classification and pattern recognition, the selective prediction has attracted consistent interest in practice \cite{xie2006bootstrap, santos2005optimalroc}.
In many mission-critical machine learning applications, such as autonomous driving,
medical diagnosis, and home robotics, detecting and controlling statistical uncertainties in machine learning processes is essential. 
These AI tasks can benefit from effective selective prediction. 
For example, if a self-driving car can identify a situation in which it does not know how to respond, it can alert the human drivers to take over and the risk is thus controlled \cite{yaniv2017}. Therefore, besides accuracy improvement, the selective option has provided new insights for knowledge discovery, which could potentially benefit financial applications.

In this paper, we ask the following question: if a machine learning method could outperform logistic regression, which part of the dataset performs better in prediction? 
To answer the above question, we utilized the novel idea of a selective learning framework, bridging the gap between well-understood logistic regression and a black-box neural network.
Here, we provide the reasoning: although the majority of the credit rating datasets are able to be explained by well-learned linear models due to their good interpretability, a small portion of the datasets still contain some non-linear patterns that should be fed into deeper machine learning methods for further study, as the linear model cannot explain this small portion well.
Non-linearity effects, such as the diminishing marginal effect, are commonly observed in credit scoring datasets. For example, for a borrower with a perfect credit record, missing a payment significantly increases his/her chance of defaulting. On the other hand, allowing five late payments does not lead to such a sharp increase. At a certain point, more missing payments do not guarantee a significantly higher probability of default. Thus, the marginal probability diminishes and therefore it is non-linear. 
If our method can identify the non-linear portion of samples, detailed explanations can be provided to regulators and borrowers to illustrate the usage of machine learning methods.
Hence, we believe that the idea of selective prediction naturally fits with credit scoring.

Our selective learning framework incorporates the recent idea of selective options into the traditional machine learning methods. Different from Selective Net, our framework aims to improve the interpretability instead of the accuracy of the neural network. 
In this work, by comparing empirical performance between logistic regression and a neural network, a novel selective labeling technique is introduced to separate the dataset into linear and non-linear parts, where the non-linear parts are considered as an equivalent version of the rejected set. 
Then, a Difference Net is used to train the selective labels. As a byproduct, the rejection rate of the dataset can be accurately estimated. The Difference Net learns the improvement of the neural network over logistic regression, delivering explanations to regulators/customers/loan officers, and thus disentangling the black-box structure of the neural network. One additional advantage of our framework is that minimal modifications are made for the traditional logistic regression approach, providing a smooth transition to machine learning methods. 
In this paper, rigorous theoretical justifications are provided to support our argument. 

We conduct an extensive empirical investigation on the selective learning framework using different data sources. Most of the samples in the dataset can be well explained by logistic regression, as suggested by the low rejection rate. Difference Net successfully identifies the weakness of logistic regression, where strong non-linearity is observed. For the rejected set, the neural network significantly outperforms logistic regression. In particular, risks are notably underestimated by logistic regression, due to its failure to capture non-linear effects, such as diminishing marginal effects of features. Finally, detailed comprehensive explanations are provided to meet explanation requirements. 

The contributions of this work are as follows: 
(i) the selective learning framework, which builds the bridge between transparent logistic regression and a highly accurate black-box neural network for credit risk; (ii) the detailed interpretations of Difference Net from a different perspective to satisfy the requirements of regulators, managers, data professionals, and borrowers; (iii) the rigorous theoretical justification of our framework. 


The rest of the paper is organized as follows: we introduce our methodology in Section 2. In Section 3, the empirical results are presented. We conclude the article in Section 4.

\section{Methodology}

\subsection{Two-stage learning}

Assume that we have the joint data and corresponding label space $\mathcal{D} \times \mathcal{Y}$, where $\mathcal{D}$ is the dataset that has $n$ total samples, with $p$ as the number of features and $\mathcal{Y}$ as the corresponding labels. 
Here, we assume that the data generating process follows the assumption given below. 
\begin{assumption}\label{assump:model}
At each sample point $\*x$, the default event follows the binomial distribution with the probability $p(\*x)$, where $p(\*x)$ is continuous. 
\end{assumption}

As described earlier, we consider two models here: logistic regression and neural network. 
Due to its capacity for fitting complicated, high-dimensional, non-linear functions, the neural network has been widely studied and discussed in both academia and industry. 
The approximation power of the neural network can be summarized by the universal approximation theorem \cite{cybenko1989approximation,hornik1991approximation,kubat1999neural,hassoun1995fundamentals}.
\begin{theorem}[Universal Approximation Theorem] \label{theorem:universal}
Fix a continuous function $\sigma :\mathbb {R} \rightarrow \mathbb {R}$ (activation function) and positive integers $d,D$. The function $\sigma$  is not a polynomial if, and only if, for every continuous function $f: \mathbb {R} ^{d} \to \mathbb {R} ^{D}$ (target function), every compact subset $K$ of $\mathbb {R} ^{d}$, and every $\epsilon >0$, there exists a continuous function $f_{\epsilon}: \mathbb {R} ^{d}\to \mathbb {R} ^{D}$ (the layer output) with representation
\begin{align*}
f_{\epsilon }=W_{2}\circ \sigma \circ W_{1},
\end{align*}
where $W_{2},W_{1}$ are composable affine maps and $\circ$  denotes the component-wise composition, such that the approximation bound
\begin{align*}
\sup _{x\in K}\,\|f(x)-f_{\epsilon }(x)\|< \epsilon
\end{align*}
holds for any $\epsilon$ that is arbitrarily small.
\end{theorem}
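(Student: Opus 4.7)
The plan is to prove the two implications separately, reducing the multi-output case $D > 1$ to the scalar case $D = 1$ in both directions. Writing out $f_\epsilon = W_2 \circ \sigma \circ W_1$ explicitly gives the familiar single-hidden-layer form $f_\epsilon(x) = \sum_{j=1}^{N} a_j \sigma(w_j \cdot x + b_j) + c$, where $N$ is the hidden width, $a_j \in \mathbb{R}^D$, $w_j \in \mathbb{R}^d$, $b_j \in \mathbb{R}$, and $c \in \mathbb{R}^D$. For $D > 1$, approximating each coordinate of $f$ separately with $N_i$ neurons and concatenating produces a network of width $\sum_i N_i$ whose supremum error on $K$ is bounded by $\sqrt{D}$ times the worst coordinate error, so it suffices to handle $D = 1$.

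The easy direction (the approximation property forces $\sigma$ to be non-polynomial) goes by contraposition. If $\sigma$ is a polynomial of degree $k$, then every output of $f_\epsilon$ is a polynomial in $x$ of total degree at most $k$, regardless of the width $N$ or the choice of $W_1, W_2$. Polynomials of bounded degree on $K$ form a finite-dimensional, hence closed, proper subspace of $C(K)$ whenever $K$ has nonempty interior; choosing $f(x) = \sin((k+1)\pi x_1)$ and $\epsilon$ smaller than the distance from $f$ to this subspace produces a contradiction.

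For the nontrivial direction I would follow the classical Leshno--Lin--Pinkus--Schocken route, whose target is density in $C(K)$ of $\mathcal{S}(\sigma) = \mathrm{span}\{x \mapsto \sigma(w \cdot x + b) : w \in \mathbb{R}^d,\, b \in \mathbb{R}\}$. First, in one variable, convolve $\sigma$ against a smooth compactly supported bump $\phi$ to obtain $\sigma_\phi = \sigma * \phi \in C^\infty$, which lies in the uniform closure of $\mathcal{S}(\sigma)$ on any compact by approximating the convolution by Riemann sums of shifted $\sigma$'s. Second, show that some $\phi$ makes $\sigma_\phi$ non-polynomial, using the distributional fact that $\sigma * \phi$ polynomial for every $\phi \in C_c^\infty$ would force $\sigma$ itself to be polynomial. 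Third, exploit that a non-polynomial $C^\infty$ function $\sigma_\phi$ has $\sigma_\phi^{(k)}(t_0) \neq 0$ for some $k$ and some $t_0$, and recover the monomial $t^k$ on any compact interval as a uniform limit of scaled $k$-th finite differences of $\sigma_\phi(\lambda t + t_0)$ as $\lambda \to 0$; iterating over $k$ embeds all univariate polynomials into the closure of $\mathcal{S}(\sigma)$, and the Weierstrass approximation theorem closes the one-dimensional case. Fourth, lift to $\mathbb{R}^d$ by observing that polynomials in the ridge variable $w \cdot x$ for varying $w \in \mathbb{R}^d$ span all multivariate polynomials, via polarization of $(w \cdot x)^k$, and these are dense in $C(K)$ by the multivariate Weierstrass theorem.

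The principal obstacle will be the pair of steps that converts the qualitative hypothesis ``$\sigma$ is not a polynomial'' into the quantitative assertion that some mollified $\sigma_\phi$ possesses a nonvanishing higher derivative, together with the extraction of $t^k$ via a uniform finite-difference limit in $\lambda$. This is where the minimal smoothness assumption on $\sigma$ (only continuity) interacts delicately with the distributional characterization of polynomials, and where care is needed to justify interchange of limits uniformly on $K$. The other steps, namely the reduction to $D = 1$, the polynomial converse, and the polarization lifting from ridge functions to full multivariate polynomials, follow standard templates once the one-dimensional density is in hand.
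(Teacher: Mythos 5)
The paper does not prove this theorem at all: it is stated as imported background, with citations to Cybenko, Hornik, and standard texts, and the Supplementary section contains proofs only of Lemmas~\ref{lemma:solution}, \ref{lemma:concentration-sample-train}, and \ref{lemma: concentration-sample-test}. So there is no in-paper argument to compare against. Judged on its own, your outline is a faithful sketch of the Leshno--Lin--Pinkus--Schocken proof, which is indeed the correct source for the ``dense if and only if $\sigma$ is not a polynomial'' characterization actually stated here (Cybenko/Hornik prove weaker sufficient conditions). The reduction to $D=1$, the contrapositive for the easy direction (noting that the quantifier ``for every compact $K$'' lets you pick $K$ with nonempty interior so that degree-$\le k$ polynomials form a proper closed subspace of $C(K)$), the mollification step, the extraction of $x^k$ via $\frac{d^k}{dw^k}\sigma_\phi(wx+b)\big|_{w=0}=x^k\sigma_\phi^{(k)}(b)$ realized as uniform limits of finite differences, and the polarization lift from ridge monomials $(w\cdot x)^k$ to all multivariate polynomials are all the right ingredients in the right order. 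You also correctly identify the genuinely delicate lemma: showing that if $\sigma\ast\phi$ is a polynomial for every $\phi\in C_c^\infty$ then $\sigma$ is a polynomial, which in LLPS requires a Baire category argument to obtain a uniform degree bound before concluding. One small imprecision: for the monomial-extraction step you need, for \emph{each} $k$, some $b_k$ with $\sigma_\phi^{(k)}(b_k)\neq 0$ (immediate from non-polynomiality, since $\sigma_\phi^{(k)}\equiv 0$ would make $\sigma_\phi$ a polynomial of degree below $k$), not merely ``some $k$ and some $t_0$''; your subsequent phrase ``iterating over $k$'' suggests you intend the correct statement, but it should be made explicit.
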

Under Assumption~\ref{assump:model}, the Universal Approximation Theorem guarantees that the target function $p(\*x)$ can be modeled by the neural network accurately and with the appropriate effort.
Therefore, we consider the neural network as a ground truth with the output function $f(\*x)$. For prediction, a threshold $\tau$ is used such that 
\begin{align*}
F(\*x) = 
\begin{cases}
1, & \text{if $f(\*x) \geq \tau$}, \\
0, & \text{otherwise}.
\end{cases}
\end{align*}
Here, the choice of $\tau$ will be determined by credit rating institutions with domain expertise. 
For simplicity, we use $\tau=0.5$ unless otherwise mentioned. All methods in this paper use the same choice of $\tau$.

While the neural network has good generalization, its black-box structure poses a challenge to the interpretability. 
Lack of interpretation has prevented the usage of neural networks in the financial industry, as explanations are required from regulators. 
Traditionally, logistic regression has been the industry standard for the credit scoring problem, due to its accuracy and, most importantly, interpretability.
In machine learning, a logistic regression model can simply be regarded as a special case of the neural network with no hidden layers; therefore, it serves as a coarse model of a neural network. 
We assume that logistic regression has the output $\widetilde{f}(\*x)$ with prediction $\widetilde{F}(\*x)$, where the logit function is assumed to be linear with respect to all variables,
\begin{align*}
\ln \left( \frac{ \widetilde{f}(\*x) }{ 1-\widetilde{f}(\*x) } \right) = \sum_{i=1}^p a_i x_i.     
\end{align*}
During the first stage, the logistic regression and neural network are fitted by the datasets separately. 

We wish to build a model with logistic regression's interpretability and the neural network's generalization power. 
To achieve this goal, we introduce an additional layer to build the bridge between the logistic regression and neural network. We employ another neural network to learn the difference between two models. 
We consider a Difference Net with the output $g(\*x)$ and prediction $G(\*x)$ that serves as a binary qualifier for $F$ and $\widetilde{F}$ to determine whether the data can be explained by the logistic regression. For sample $\*x \in \mathcal{D}$, 
\begin{align}
G(\*x) = \begin{cases}
1, &\text{if $\*x$ is accepted by the logistic regression}; \\
0, &\text{if $\*x$ is rejected by the logistic regression}.
\end{cases}
\end{align} 
The Difference Net is learned from the empirical data. 
Labels are required in order to properly train the net. 
Ideally, we consider a new selective labeling $\mathcal{Z}$ such that
\begin{align}
z = \begin{cases}
1, &\text{if $F(\*x) = \widetilde{F}(\*x)$},  \\
0, &\text{if $F(\*x) \neq \widetilde{F}(\*x)$}.
\end{cases}
\end{align}
We name the data with $z=0$ as our reject portion of the datasets.
In practice, however, there is the possibility that the neural network may not work perfectly.
This could be due to incorrect assumptions of the model. 
Such cases are not interesting and we only focus on the region wherein the neural network outperforms logistic regression here. We propose a practical selective labeling:
\begin{align}
z = \begin{cases}
1, &\text{if otherwise}, \\
0, &\text{if $F(\*x) \neq \widetilde{F}(\*x)$ and $y=F(\*x)$}.
\end{cases}
\end{align}
We denote $\mathcal{Z}$ as all labels, and the practical selective labeling is referred to as selective labeling for simplicity in the rest of the paper. 
Then, at the second stage, the Difference Net is applied to learn selective labels. The structure of the method is summarized in Figure~\ref{fig:flow}. In the following lemma, we show the solution to selective labels. 
\begin{lemma} \label{lemma:solution}
Under Assumption~\ref{assump:model}, the solution to the dataset $\mathcal{D} \times \mathcal{Z}$ has the output $1-|f(\*x)-g(\*x)|$ for $\*x \in \mathcal{D}$.
\end{lemma}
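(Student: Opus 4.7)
The plan is to derive the Bayes-optimal predictor $g^*$ that minimizes the cross-entropy (or any strictly proper scoring) loss on $\mathcal{D}\times\mathcal{Z}$, which is pointwise equal to the conditional expectation $g^*(\*x)=\mathbb{E}[z\mid\*x]=\Pr[z=1\mid\*x]$, and then to evaluate this probability in closed form using Assumption~\ref{assump:model} together with the universal-approximation identification of $f$ with the underlying default probability $p$.

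First I would recall that, under Assumption~\ref{assump:model}, $y\mid\*x\sim\mathrm{Bernoulli}(p(\*x))$, and by Theorem~\ref{theorem:universal} the neural-network output satisfies $f(\*x)=p(\*x)$ up to arbitrarily small error; in the population limit I therefore treat $f(\*x)$ as $\Pr[y=1\mid\*x]$. The hard-label maps $F$ and $\widetilde F$ are deterministic functions of $\*x$ since $\tau$ is fixed, so the only randomness in $z$ given $\*x$ comes from $y$.

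Next I would split on whether $F(\*x)=\widetilde F(\*x)$. In the equal case, the second branch of the selective-labeling rule never triggers, so $z=1$ almost surely and $\Pr[z=1\mid\*x]=1$. In the opposite case, $z=0$ iff $y=F(\*x)$, and because $F(\*x)\in\{0,1\}$ a direct calculation gives
\begin{align*}
\Pr[y=F(\*x)\mid\*x] = F(\*x)\,f(\*x)+(1-F(\*x))(1-f(\*x)) = 1-|f(\*x)-F(\*x)|.
\end{align*}
Assembling the two branches yields the closed-form expression asserted for the optimal Difference Net output.

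The step requiring the most care is the identification of the trained network $g$ with this Bayes-optimal regressor: I would invoke Theorem~\ref{theorem:universal} a second time to argue that the Difference Net has enough capacity to attain the infimum of the population cross-entropy, which is achieved uniquely at $g^*(\*x)=\mathbb{E}[z\mid\*x]$. The chief technical nuisance is the boundary behavior at $f(\*x)=\tau$, where $F$ can flip; here I would appeal to continuity of $p$ (hence of $f$) from Assumption~\ref{assump:model} to conclude that the level set $\{f=\tau\}$ is negligible in the generic case, so the piecewise formula glues continuously and the conclusion holds on $\mathcal{D}$ up to a measure-zero exceptional set.
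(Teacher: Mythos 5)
Your strategy---computing the Bayes-optimal regressor $\mathbb{E}[z\mid\*x]=\Pr[z=1\mid\*x]$ under the practical labeling rule and identifying $f$ with $p$ via universal approximation---is considerably more careful than the paper's own proof, which is a one-line heuristic: it declares that ``the probability of the difference between the logistic regression and the neural network'' is $|f(\*x)-g(\*x)|$ (silently re-using $g$ for the logistic-regression output $\widetilde f$, even though $g$ was defined as the Difference Net output) and stops there. The problem is that your calculation does not actually land on the asserted formula, and the sentence ``Assembling the two branches yields the closed-form expression asserted'' conceals a genuine gap. What you derive is $\Pr[z=1\mid\*x]=1$ when $F(\*x)=\widetilde F(\*x)$, and $\Pr[z=1\mid\*x]=1-\Pr[y=F(\*x)\mid\*x]=|f(\*x)-F(\*x)|$ when they disagree; in one formula, $\Pr[z=1\mid\*x]=1-\mathbbm{1}[F(\*x)\neq\widetilde F(\*x)]\bigl(1-|f(\*x)-F(\*x)|\bigr)$. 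This is not $1-|f(\*x)-g(\*x)|$: on the agreement set your expression is identically $1$ while the lemma's is $1-|f-\widetilde f|<1$ in general, and on the disagreement set with $F=1$, $\widetilde F=0$ yours equals $1-f$ while the lemma's equals $1-f+\widetilde f$. The two coincide only in degenerate situations (e.g.\ $\widetilde f\in\{0,1\}$), so the assembly step fails as written.

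The source of the mismatch is that the paper's $|f-\widetilde f|$ is the disagreement probability of two Bernoulli draws with parameters $f(\*x)$ and $\widetilde f(\*x)$ under a comonotone coupling (equivalently, the total-variation distance between the two conditional laws), whereas the event defining $z=0$ in the text is $\{F(\*x)\neq\widetilde F(\*x)\text{ and }y=F(\*x)\}$, which involves the hard-thresholded predictions and the single observed label $y$. You therefore have two options: either adopt the paper's alternative reading of the labels explicitly (in which case the split on $F=\widetilde F$ disappears and the proof reduces to the coupling computation), or keep your exact conditional-expectation calculation and acknowledge that it yields a different function from the one stated in the lemma. As the proposal stands, the rigorous part is correct but the stated conclusion does not follow from it; the secondary points about capacity of the Difference Net and the negligibility of the level set $\{f=\tau\}$ are fine but do not repair this step.
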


\begin{figure}[H]
    \centering
    \includegraphics[scale=0.4]{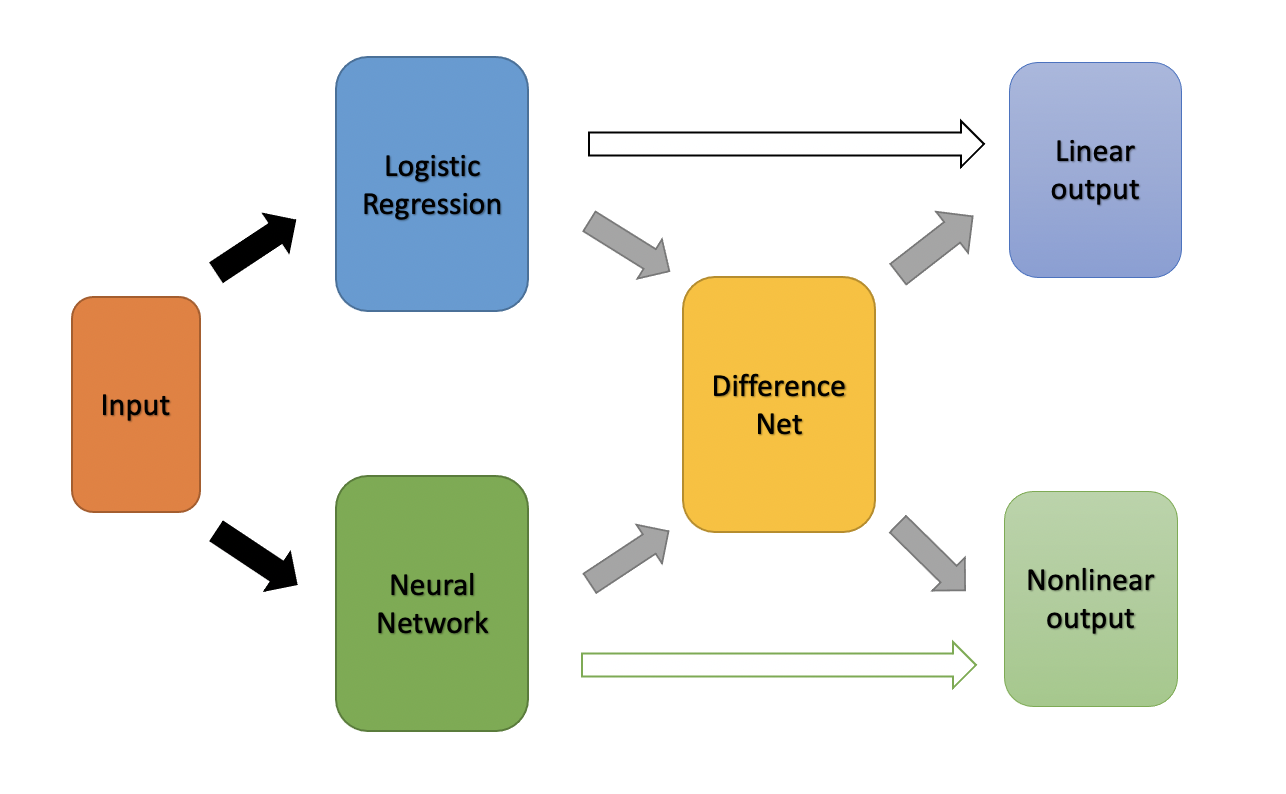}
    \caption{Two-stage Selective Learning}
    \label{fig:flow}
\end{figure}

The output to the Difference Net can also be well approximated by the neural network by Theorem~\ref{theorem:universal}, as $1-|f-g|$ is continuous given $f$ and $g$ continuous. 
As a byproduct, the rejection rate, the percentage of samples with $z=0$, can also be learned by the Difference Net \cite{ye2022}. From Lemma~\ref{lemma:solution}, we notice that if $f$ and $g$ are smooth, the new solution becomes no longer differentiable. A smooth function could be fitted by a neural network with a higher order of approximation \cite{mhaskar1996neural}. Therefore, in practice, more neurons are functioning in the Difference Net. Finally, we emphasize that the goal of the Difference Net is to provide an interpretation of neural networks, rather than to improve the accuracy.

\subsection{Explanation of the neural network}

It is important to note that different explanations \cite{lu2019good} are required for diverse situations.  The requirements of regulators and borrowers are well summarized in \cite{arrietaa2019explainable}. 
There are three types of explanations required in credit scoring, namely (i) global explanations, (ii) local instance-based explanations, and (iii) local feature-based explanations.  
Recently, sensitivity-based analysis has become increasingly popular in the interpretation of the results of neural networks \cite{horel2018,horel2020significance}. 
In this paper, we follow their analysis with slight modifications to accommodate the requirements of the credit scoring problem. 

\subsubsection{Global explanations}


Global explanations describe how the classification model works in general, and they interpret the logic used in its prediction. In the credit rating industry, instead of relying on individual explanations of each instance, regulators, managers, and data professionals leverage global explanations to gain an overall understanding of the scoring model in order to ensure that the model is adequate and fair in its predictions.

In this paper, the relative importance of input features is used as the global explanation. We measure the relative importance of input features at a global level to understand what has been learned by the neural network during training, and we are mainly interested in variables that lead to rejection. The global importance of input features over the training dataset of the model is defined as follows:
\begin{align}
\lambda_j = \frac{100}{C} \sqrt{\frac{1}{n} \sum_{i=1}^n \left( \frac{\partial f(\*x_i)}{\partial x^j} \right)^2},    
\end{align}
where $n$ is the number of training samples in the case of independent observations. Here, $C$ is the normalization factor such that $\sum_{j=1}^p \lambda_j = 100$, and $p$ is the number of input features. Global sensitivity is captured through partial derivatives, which are averaged across all training samples of the dataset. To avoid cancellation of positive and negative values, the square operation is used on top of the partial derivatives. This metric helps to construct a rank for features by their predictive power learned from the model. Indeed, a large value of this metric means that a large proportion of the neural network output sensitivity is explained by the considered variable. It also helps to filter out the insignificant features: a very small value of this metric means that the model outcome is almost insensitive to the feature.


\begin{remark}
Categorical features are transformed into numerical variables for derivative computation. 
This is because neural networks are defined as differentiable and therefore can only handle continuous numerical features. Neural networks handle categorical features as continuous variables. With the transformation, previously described metrics can be naturally applied. 
\end{remark}

\subsubsection{Local instance-based explanations}


As opposed to global explanations, local explanations provide a local understanding of predictions at instance level. Loan officers prefer such explanations because they are interested in validating whether the predictions given by the model for a loan application is justified. Loan officers review the model’s prediction by looking at other similar loan applications with the same outcome to get an understanding of why a loan application has been denied compared to other loan applications that were previously accepted and then ended up defaulting \cite{arrietaa2019explainable}. This type of explanation is usually provided in the form of prototypes (i.e., categorizing the applications based on similarity).

Our Difference Net serves as the perfect tool for the local instance-based explanations. The logistic regression has provided good baseline accuracy for the dataset. While the neural network could further improve the logistic regression, the improvement is local. With the Difference Net, we are able to identify the local region where the neural network has significant improvements.  With the localization, recurrent patterns can be found from data. Furthermore, the output from the logistic regression is combined with important features of samples to provide visualization. As a result, a typical explanation to use a neural network for a loan officer may resemble the following:
``This person is rejected because his/her repayment in  the last month has been delayed for 2 months, which is similar to A and B in the datasets. Although they have good credit scores calculated by the traditional method, A and B could not pay off."
Furthermore, we are able to provide the theoretical justification of the generalization error, which is discussed later.

\subsubsection{Local feature-based explanations}

Local feature-based explanations are concerned with how and why a specific prediction is made, at a local level. Such explanations are usually preferred by borrowers, as they are most interested in why their applications are denied. More importantly, this information can help them to improve their credit scores to obtain approval for loans in the future. If reasonable explanations are provided, they can work on their deficiencies to obtain better credit scores. Feature relevance scores or specific rules can be provided for these explanations. In terms of our selective learning, it is also beneficial to understand why certain samples are rejected by logistic regression.

A standard method to understand the local behaviors of a differentiable function is through the Taylor expansion. The Taylor series focuses on a small neighborhood of one sample of interest and provides a good approximation locally. Here, it can serve as a useful tool to capture the local relative importance of input features. In practice, a first-order Taylor expansion is usually sufficient for the analysis. Mathematically, for any input vector $\*x$ close to $\*x_0$, we have
\begin{align}
f(\*x) - f(\*x_0) = (\*x - \*x_0)^T \nabla f(\*x_0) 
+ o(\|\*x-\*x_0\|)
\end{align}
for $\*x \rightarrow \*x_0$. The Taylor expansion shows that neural network output $f(\*x)$ can be well explained by its gradient locally. Then, for a sample $\*x_0$ with input feature $j$, it is sufficient to look at its partial derivative as local importance:
\begin{align}
\lambda^j_0 =  \frac{\partial f(\*x_0)}{\partial x^j}.
\end{align}
For categorical variables, we simply consider modifying the variable to the nearest value, i.e., 
\begin{align}
\lambda^j_0 = f(x_1, \dots, x_j \pm 1, \dots, x_p) - f(x_1, \dots, x_j, \dots, x_p).
\end{align}
If the difference is significant, then the feature is important locally. Intuitively, this informs us why such a data point belongs to the rejected set. These explanations can be provided by the neural network.


\subsection{Concentration of measure results}

We have designed a learning algorithm that can identify the rejection region of the logistic regression in the earlier section. 
As a byproduct, it is interesting to estimate the data rejection rate, i.e., the percentage of samples that are rejected.  
We provide the definition of the rejection rate. 
\begin{definition}
Assume that the training and test data follow from a universal underlying distribution with probability density function $\eta(\*x)$. Denote $\Omega_d$ and $\Omega_{nd}$ as the default and not default regions of the rejection region $\Omega$, respectively. The rejection rate is defined as the percentage of rejected data:
\begin{align}
\gamma =   \int_{\Omega_d} \eta(\*x) (1-p(\*x)) \ d\*x +  \int_{\Omega_{nd}} \eta(\*x) p(\*x) \ d\*x.
\end{align}
\end{definition}

We provide some theoretical justification of the data rejection rate.
We require the rejection rate in the training set to serve as a natural and rigorous estimation of the intrinsic value for the population.
This is called generalization, a key topic in machine learning.
Generalization measures how accurately an algorithm can predict output values for unseen data. 
As an intrinsic value of an unknown dataset, the rejection rate provides us with important information about the non-linearity of the data and should be estimated accurately from outside of the sample.
In this section, we study the generalization of the rejection rate.
We provide a rigorous framework that uses the concentration of measures to estimate the rejection rate in testing.

For a sample $s$, if $\mathcal{I}$ is an indicator function and $\mathcal{I}_{s}$ indicates whether sample $s$ is rejected or not, then
\vspace{-0.2cm}  
\begin{align}
    \mathcal{I}_{s} = 
    \begin{cases}
        1, & \text{with probability}\,\, \gamma\,; \\
        0, & \text{with probability}\,\, 1-\gamma\,.
    \end{cases}
\end{align}
Let us say that $\mathcal{X}$ and $\mathcal{W}$ are the training and test sets. 
Denote by $n_{\mathcal{X}}$ and $n_{\mathcal{W}}$ the size of $\mathcal{X}$ and $\mathcal{W}$.
Then, for $\{\mathbf{x}_i\}_{i=1}^{n_{\mathcal{X}}} \in \mathcal{X}$, we can write the training set rejection rate $\gamma_{\mathcal{X}}$ as
$\gamma_{\mathcal{X}} = \frac{1}{n_{\mathcal{X}}} \sum_{i=1}^{n_{\mathcal{X}}} \mathcal{I}_{\mathbf{x}_i} $,
where $\mathcal{I}_{\mathbf{x}_i}$ is the rejection indicator function.
Similarly, for $\{\mathbf{w}_i\}_{i=1}^{n_{\mathcal{W}}} \in \mathcal{W}$, the test set rejection rate $\gamma_{\mathcal{W}}$ is $\label{eqn:test-corruption}
\gamma_{\mathcal{W}} = \frac{1}{n_{\mathcal{W}}} \sum_{i=1}^{n_{\mathcal{W}}} \mathcal{I}_{\mathbf{w}_i}$
where $\mathcal{I}_{\mathbf{w}_i}$ is the rejection indicator function.
The following lemma indicates that the training rejection rate is close to the population rejection rate with high probability.

\begin{lemma}[Concentration result between the training set and universal]
Assume that both $ \mathcal{X}$ and  $ \mathcal{W}$ are from some universal distribution where each sample is rejected with probability $\gamma$. 
The inequality is satisfied for any positive $\epsilon_1$:
\begin{equation}
    \label{eqn:hoeffding-selective-net}
    \mathcal{P}(\left\| \gamma_{\mathcal{X}} - \gamma \right\| \geq \epsilon_1)\leq 2 e^{-2n_{\mathcal{X}} \epsilon_1^2} \,.
\end{equation}
\label{lemma:concentration-sample-train}
\end{lemma}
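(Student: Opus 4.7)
The plan is to recognize this as a direct application of Hoeffding's inequality to the sample mean of bounded independent random variables. First, I would note that since both $\mathcal{X}$ and $\mathcal{W}$ are assumed to be drawn i.i.d.\ from the universal distribution, the indicator random variables $\{\mathcal{I}_{\mathbf{x}_i}\}_{i=1}^{n_{\mathcal{X}}}$ are themselves i.i.d.\ Bernoulli($\gamma$). In particular, each $\mathcal{I}_{\mathbf{x}_i}$ takes values in the bounded interval $[0,1]$ and satisfies $\mathbb{E}[\mathcal{I}_{\mathbf{x}_i}] = \gamma$, so that $\mathbb{E}[\gamma_{\mathcal{X}}] = \gamma$ by linearity of expectation.

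Next, I would invoke Hoeffding's inequality in its standard form: for independent random variables $Y_1,\dots,Y_n$ with $Y_i \in [a_i,b_i]$ almost surely, and $\bar Y = \frac{1}{n}\sum_i Y_i$, one has
\begin{equation*}
\mathcal{P}\bigl(|\bar Y - \mathbb{E}[\bar Y]| \geq \epsilon\bigr) \leq 2 \exp\!\left(-\frac{2 n^2 \epsilon^2}{\sum_{i=1}^n (b_i - a_i)^2}\right).
\end{equation*}
Specializing to $Y_i = \mathcal{I}_{\mathbf{x}_i}$ with $a_i = 0$, $b_i = 1$, and $n = n_{\mathcal{X}}$, the denominator in the exponent reduces to $n_{\mathcal{X}}$, and the bound collapses to exactly $2 e^{-2 n_{\mathcal{X}} \epsilon_1^2}$, which is the claimed inequality.

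The main obstacle, such as it is, is not the computation but the independence assumption underlying the identification of $\gamma_{\mathcal{X}}$ as a clean sum of i.i.d.\ Bernoullis. In particular, the indicator $\mathcal{I}_{\mathbf{x}_i}$ depends on the output of the already-trained logistic regression and neural network applied to $\mathbf{x}_i$, so one needs the training sample to be drawn independently of the models (or, more practically, to treat the sample in this lemma as a held-out draw from the universal distribution so that the rejection indicators inherit the i.i.d.\ property). I would briefly flag this modeling point and then proceed with Hoeffding as above; no further technical machinery is required, and the two-sided absolute-value form of the bound is already built into the standard statement of the inequality.
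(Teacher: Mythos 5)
Your proposal is correct and matches the paper's own proof, which likewise applies Hoeffding's inequality directly to the i.i.d.\ rejection indicators after noting that $\mathbb{E}[\gamma_{\mathcal{X}}] = \gamma$. The additional remark you make about the indicators' independence from the trained models is a reasonable caveat not addressed in the paper, but the core argument is identical.
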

Then, we need a concentration of measure results between the training and test set for generalization purposes.

\begin{lemma}
[Concentration between the training and test set]
Assume that both $ \mathcal{X}$ and $\mathcal{W}$ are from some universal distribution where each sample is rejected with probability $\gamma$.
Then, for any positive pair $(\epsilon_1, \epsilon_2)$, we have the following inequality:
\begin{equation}
    \mathcal{P}( \| \gamma_{\mathcal{X}} - \gamma_{\mathcal{W}}\| \ge \epsilon_1 +\epsilon_2) \le 2 e^{-2n_{\mathcal{X}} \epsilon^2_1} + 2 e^{-2n_{\mathcal{W}} \epsilon^2_2 }  \,.
\end{equation}
\label{lemma: concentration-sample-test}

\end{lemma}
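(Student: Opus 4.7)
The plan is to reduce the two-sample concentration to the one-sample concentration already established in Lemma~\ref{lemma:concentration-sample-train} by inserting the population rate $\gamma$ as a pivot. First I would observe that the test set $\mathcal{W}$ plays exactly the same role as the training set $\mathcal{X}$ in Lemma~\ref{lemma:concentration-sample-train}: the indicators $\mathcal{I}_{\mathbf{w}_i}$ are i.i.d.\ Bernoulli$(\gamma)$ random variables, so the identical Hoeffding-type argument yields
\begin{equation*}
\mathcal{P}(\| \gamma_{\mathcal{W}} - \gamma \| \geq \epsilon_2) \leq 2 e^{-2 n_{\mathcal{W}} \epsilon_2^2}.
\end{equation*}
This companion bound will be invoked explicitly, but it is a direct twin of the already-proved lemma.

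Next I would use the triangle inequality with $\gamma$ as the pivot,
\begin{equation*}
\| \gamma_{\mathcal{X}} - \gamma_{\mathcal{W}} \| \leq \| \gamma_{\mathcal{X}} - \gamma \| + \| \gamma - \gamma_{\mathcal{W}} \|,
\end{equation*}
and argue contrapositively: whenever the left-hand side exceeds $\epsilon_1 + \epsilon_2$, at least one of the two summands on the right must exceed its respective threshold. Formally, this gives the event inclusion
\begin{equation*}
\{ \| \gamma_{\mathcal{X}} - \gamma_{\mathcal{W}} \| \geq \epsilon_1 + \epsilon_2 \} \subseteq \{ \| \gamma_{\mathcal{X}} - \gamma \| \geq \epsilon_1 \} \cup \{ \| \gamma - \gamma_{\mathcal{W}} \| \geq \epsilon_2 \}.
\end{equation*}
Applying the union bound, followed by Lemma~\ref{lemma:concentration-sample-train} on the first term and its $\mathcal{W}$-analogue on the second, produces exactly the stated bound $2e^{-2 n_{\mathcal{X}} \epsilon_1^2} + 2 e^{-2 n_{\mathcal{W}} \epsilon_2^2}$.

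There is no real obstacle here; the proof is essentially a two-line pivoting argument. The only subtlety worth flagging is that the decoupling via $\gamma$ implicitly uses independence between $\mathcal{X}$ and $\mathcal{W}$ to get the union bound to be tight, and that the split of the total budget $\epsilon_1 + \epsilon_2$ between training and test contributions is a modeling choice left to the user: one may later optimize over $(\epsilon_1, \epsilon_2)$ subject to $\epsilon_1 + \epsilon_2 = \epsilon$ to obtain the sharpest bound for a given total deviation $\epsilon$, but this optimization is not needed for the statement itself.
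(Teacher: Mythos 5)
Your proposal is correct and matches the paper's own proof essentially step for step: both apply Hoeffding's inequality to the test set to obtain the companion bound $\mathcal{P}(\|\gamma_{\mathcal{W}} - \gamma\| \geq \epsilon_2) \leq 2e^{-2n_{\mathcal{W}}\epsilon_2^2}$, then combine it with Lemma~\ref{lemma:concentration-sample-train} via a union bound and the triangle inequality with $\gamma$ as the pivot. Your phrasing through the explicit event inclusion is just the contrapositive of the paper's ``both inequalities hold simultaneously with high probability'' formulation; the arguments are the same.
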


From the above lemmas, we can see that the rejection rate for the training set could be a good estimator of the rejection threshold of the population and test set.

\section{Empirical results}

While there have been many datasets used in the benchmark work \cite{lessmann2015benchmarking}, most of them are either not publicly available, lack variable names, or contain an insufficient number of samples which does not guarantee statistically significant results. In our study, we focus on two publicly available datasets, which have variable names that are easier to interpret, and also have enough samples such that our estimation of the rejection rate is statistical significance. In both datasets, the setup of neural networks and the optimization algorithms in backpropagation are the same. At the first learning stage, we train a neural network on the datasets and compare them with the conventional explainable logistic regression. The architecture is rather simple for credit scoring problems. Based on the model architecture optimization, we observe that one hidden layer of two units with the logistic activation function is sufficient. At the second learning stage, we construct a Difference Net and find that one hidden layer of five neurons is sufficient. The additional neurons are required as the output function becomes non-differentiable, as pointed out in  Lemma~\ref{lemma:solution}. The backpropagation optimization is solved by the conjugate gradient method. We limit the total number of training epochs to 500 as the errors decay slowly.

We then study the model performance within each learning stage. The neural network is compared with the logistic regression within the first learning stage. As discussed in \cite{lessmann2015benchmarking}, there is no single perfect performance measurement for the credit scoring problem. In this study, three types of measurements are considered: (i) classification error — this tracks the percentage of the corrected prediction and is the most intuitive metric to evaluate the model fit; (ii) receiver operating characteristic curve (ROC) and area under the curve (AUC) — they provide a comprehensive evaluation of classification models.  
(iii) confusion matrix — in the credit scoring, it is important to correctly predict actual default cases, because false negatives may carry huge financial cost to banks. 
Although using other measurements that could potentially provide different insights into the model fit  \cite{lessmann2015benchmarking}. We believe classification error, ROC and AUC, and confusion matrix are sufficient since our primary concern of this study is model interpretation instead of model accuracy. We rely on the confusion matrix to further break down the default predictions by actual and predicted conditions.
After comparing the global model performance at the first stage, we focus on analyzing performance of the Difference Net. As mentioned in Section 2.2.2, the Difference Net serves as a natural localized tool to find the rejection set which mostly contains predicted defaults by neural network. As AUC and ROC are not appropriate for evaluating highly skewed predictions, only the classification error is applicable to the rejected set for interpretation purpose. While the logistic regression might perform similarly to neural networks at the global level, there exists a significant difference at the local rejection region. Local evaluations could provide a more specific understanding of the model performance.

\subsection{Taiwan data}

\subsubsection{Description of data}
We first choose the Taiwan credit score dataset \footnote{https://archive.ics.uci.edu/ml/datasets/default+of+credit+card+clients}. The payment data in October 2015 was collected from an important bank (a cash and credit card issuer) in Taiwan and the targets were credit card holders of the bank.
 Among the total 30,000 observations, 6,639 (22.12$\%$) relate to the cardholders with default payments. This research employs a binary variable, default indicator, as the response variable.  The dataset is randomly partitioned into the following sets: $75\%$ training set and $25\%$ test set. In addition, this dataset contains the following 23 variables as explanatory variables:
\begin{itemize}
\item $x_1$: Amount of the given credit (NT dollar): this includes both the individual consumer's credit and his/her family (supplementary) credit.
\item $x_2$: Gender (1 = male; 2 = female).
\item $x_3$: Education (1 = graduate school; 2 = university; 3 = high school; 0,4,5,6 = others).
\item $x_4$: Martial status (1 = married; 2 = single; 3 = divorce; 0 = others). 
\item $x_5$: Age (years).
\item $x_6 - x_{11}$: History of past payments. These variables track the past monthly payment records (from April to September, 2005) as follows: $x_6$ = repayment status in September, 2005; $x_7$= the repayment status in August, 2005; \dots; $x_{11}$ = repayment status in April, 2005. The measurement scale for the repayment status is: $-2$ = No consumption; $-1$ = Paid in full; 0=The use of revolving credit; 1=Payment delay for one month; 2 = Payment delay for two months; \dots; 8=Payment delay for eight months; 9 = Payment delay for nine months and above. 
\item $x_{12} - x_{17}$: Amount of bill statement (NT dollar). $x_{12}$ = Amount of bill statement in September, 2005; $x_{13}$ = Amount of bill statement in August, 2005; \dots; $x_{17}$ = Amount of bill statement in April, 2005. 
\item $x_{18}-x_{23}$: Amount of previous payment (NT dollar). $x_{18}$ = Amount paid in September, 2005; $x_{19}$ = Amount paid in August, 2005; $x_{23}$ = Amount paid in April, 2005. 
\item $y$: Client's behavior; 0 = Not default; 1 = Default.
\end{itemize}

\subsubsection{Results}

At the first stage, we compare the neural network and the logistic regression. The classification test error is $18.2\%$ for the logistic regression and $17.7\%$ for the neural network. This result is consistent with the literature \cite{yeh2009comparisons}. The result indicates that, overall, the neural network has a slight improvement over the logistic regression. This is not surprising as the datasets are imbalanced here given that most of the clients are creditable. The logistic regression has already provided a good approximation to identify the non-default cases and therefore is very popular in practice. In addition, the AUC is $71.1\%$ for the logistic regression and $73.1\%$ for the neural network. The comparison of the ROC curve for the two methods is plotted in Figure~\ref{fig:AUC}. The neural network has a consistent improvement over the logistic regression. Next, we look at confusion matrix for further analyses. The confusion matrix for the logistic regression is shown in  Table~\ref{tab:LR_confusion}. 
The confusion matrix for the neural network is given in Table~\ref{tab:NN_confusion}. 
These matrices show that the neural network has captured more relevant default cases than logistic regression, which is measured by recall in machine learning. More specifically, the recall of the logistic regression is 24.9$\%$, whereas the neural network achieves a recall of 31.3$\%$. This is a significant improvement because a higher recall means more actual defaults are retrieved by the model. Overall, these results show that the neural network has outperformed logistic regression in terms of accuracy with a higher recall for this dataset, and it is worth exploring where these improvements are made.

\begin{figure}[h]
\centering
\includegraphics[scale=0.5]{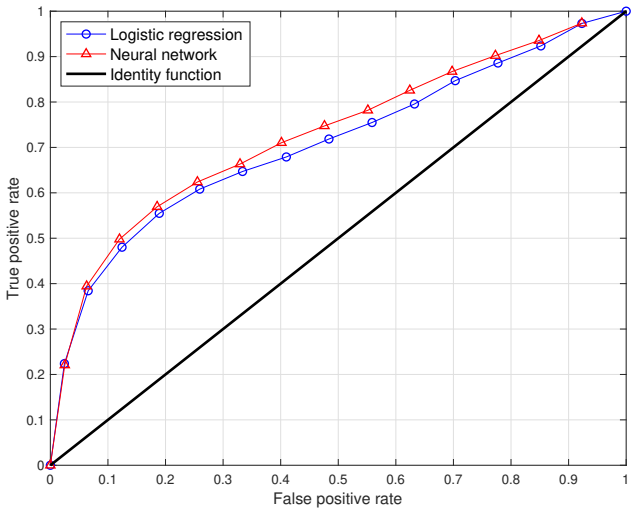}
\caption{ROC curves of the logistic regression and the neural network of the Taiwan credit score dataset}
\label{fig:AUC}
\end{figure}

\begin{table}[h]
    \centering
    \begin{tabular}{ccc}
    \hline
     $n=7500$  & Predicted: Default & Predicted: Not default  \\ \hline
Actual: Default  & 397  & 1198  \\ \hline
Actual: Not default  & 168  & 5737  \\ \hline
    \end{tabular}
    \caption{Confusion matrix of the logistic regression for the Taiwan credit score dataset.}
    \label{tab:LR_confusion}
\end{table}

\begin{table}[h]
    \centering
    \begin{tabular}{ccc}
    \hline
     $n=7500$  & Predicted: Default & Predicted: Not default  \\ \hline
Actual: Default  & 500  & 1095  \\ \hline
Actual: Not default  & 234  & 5671  \\ \hline
    \end{tabular}
    \caption{Confusion matrix of the neural network for the Taiwan credit score dataset. }
    \label{tab:NN_confusion}
\end{table}

At the second stage, we apply the Difference Net to the dataset. The classification test error is 1.6$\%$, indicating its predictive power. The percentage of predicted selected samples in the test set is 2.3$\%$, implying that, for most of the dataset, the logistic regression is sufficient. Therefore, we can have good confidence in the logistic regression in most cases. However, for approximately 2$\%$ - 3$\%$ of the dataset, the neural network strongly disagrees with the logistic regression. Special attention must be paid to this case. Conditioned on the rejected set, the classification test error is 63.8$\%$ for the logistic regression and 34.5$\%$ for the neural network. This indicates that, for the rejected set, the neural network has tremendously outperformed the logistic regression and hence should be adopted.

Further analyses are applied to the Difference Net. Among the rejected set, all samples are predicted as default by the neural network. At the global level, the Difference Net identifies samples where risks are underestimated by the logistic regression. 92.7$\%$ of rejected data has a payment delayed for two months in September 2015 (i.e., $x_6$ = 2). This pattern suggests that samples are rejected mainly due to the variable $x_6$, and further feature importance is no longer needed. For local feature-based explanations, we check samples with $x_6$ = 2 and find that the averaged difference of f(x) is 64.5$\%$ by modifying $x_6$ = 1, showing that $x_6$ plays an essential role in the rejected set. For local instance-based explanations, we combine the output from the logistic regression variable $x_6$.  In Figure~\ref{fig:local_explain}, we show the rejected set with respect to the logistic regression output and $x_6$. The result clearly indicates that, for overall lower-risk customers determined by logistic regression, if the $x_6$ = 2, then its risk is significantly underestimated, as pointed out by the neural network. This could provide a convincing explanation to local officers.


\begin{figure}[h]
    \centering
    \includegraphics[scale=0.5]{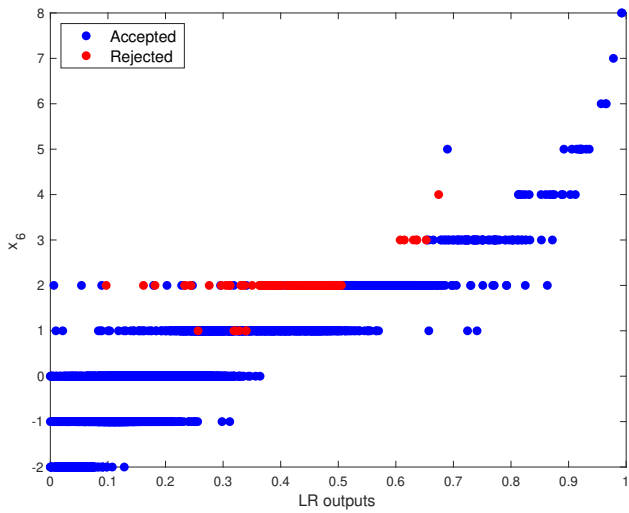}
    \caption{Predicted rejected samples for test data with respect to the logistic regression output and $x_6$ of the Taiwan credit score dataset.}
    \label{fig:local_explain}
\end{figure}


\subsection{Kaggle dataset---``Give me some credit"}

\subsubsection{Data description}

We also test the Kaggle credit score dataset \footnote{https://www.kaggle.com/c/GiveMeSomeCredit/overview}. For simplicity, data with missing variables are removed. The potential accuracy can certainly be improved with the appropriate consideration of samples with missing variables, but is not the primary concern of this paper. Among the total 120969 observations, 8,357 (6.95$\%$) relate to the cardholders with default payments. This indicates that the data are seriously imbalanced. Similarly, the dataset is randomly partitioned into $75\%$ training and $25\%$ test sets. The dataset contains 10 variables as explanatory variables:
\begin{itemize}
\item $x_1$: Total balance on credit cards and personal lines of credit except real estate and no installment debt such as car loans divided by the sum of credit limits (percentage).
\item $x_2$: Age of borrower in years (integer).
\item $x_3$: Number of times borrower has been 30 - 59 days past due but no worse in the last 2 years (integer).
\item $x_4$: Monthly debt payments, alimony, living costs divided by monthly gross income (percentage). 
\item $x_5$: Monthly income (real).
\item $x_6$: Number of open loans (installments such as car loan or mortgage) and lines of credit (e.g., credit cards) (integer). 
\item $x_7$: Number of times borrower has been 90 days or more past due (integer). 
\item $x_8$: Number of mortgage and real estate loans including home equity lines of credit (integer). 
\item $x_9$: Number of times borrower has been 60 - 89 days past due but no worse in the last 2 years (integer). 
\item $x_{10}$: Number of dependents in family, excluding themselves (spouse, children, etc.) (integer). 
\item $y$: Client's behavior; 1 = Person experienced 90 days past due delinquency or worse.
\end{itemize}

\subsubsection{Results}

At the first stage, classification test errors are $7.1\%$ for the logistic regression and $6.8\%$ for the neural network. As the non-default cases cover approximately $7\%$ of the dataset, the classification errors are not informative. In addition, the AUC is $69.4\%$ for the logistic regression and $81.6\%$ for the neural network, indicating that the neural network outperforms the logistic regression. The comparison of the ROC curve for the two methods is given in Figure~\ref{fig:AUC_GMSC}. These results are consistent with the results from Kaggle leaderboard. 
The confusion matrices for the logistic regression and neural network are provided in Table~\ref{tab:LR_confusion_GMSC} and Table~\ref{tab:NN_confusion_GMSC}, respectively. Recall rates are 18.0$\%$ for the neural network compared to 3.1$\%$ for the logistic regression. These show that the neural network has a much better ability to capture the actual default cases. Thus, we are comfortable to conclude that the neural network has better predictive power in this dataset. 

\begin{figure}[h]
\centering
\includegraphics[scale=0.5]{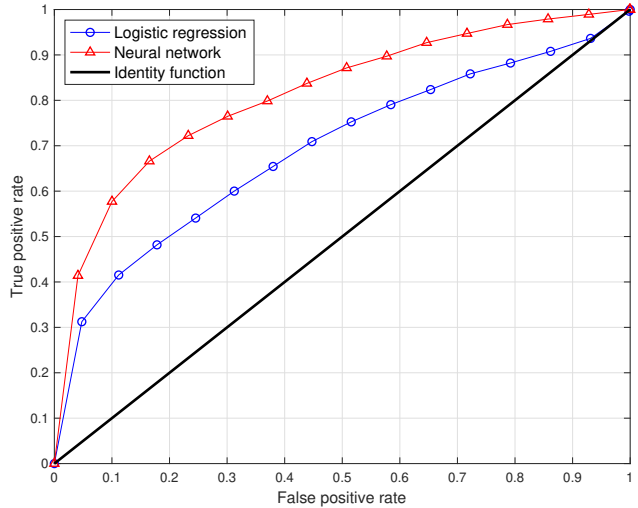}
\caption{ROC curves of the logistic regression and the neural network of the Kaggle dataset.}
\label{fig:AUC_GMSC}
\end{figure}

\begin{table}[h]
    \centering
    \begin{tabular}{ccc}
    \hline
     $n=30067$  & Predicted: Default & Predicted: Not default  \\ \hline
Actual: Default  & 66  & 2072  \\ \hline
Actual: Not default  & 50  & 27879  \\ \hline
    \end{tabular}
    \caption{Confusion matrix for the logistic regression of the Kaggle dataset.}
    \label{tab:LR_confusion_GMSC}
\end{table}

\begin{table}[h]
    \centering
    \begin{tabular}{ccc}
    \hline
     $n=30067$  & Predicted: Default & Predicted: Not default  \\ \hline
Actual: Default  & 384  & 1754  \\ \hline
Actual: Not default  & 287  & 27642  \\ \hline
    \end{tabular}
    \caption{Confusion matrix for the neural network of the Kaggle dataset.}
    \label{tab:NN_confusion_GMSC}
\end{table}


At the second stage, we apply the Difference Net to the dataset. The Difference Net identifies $1.4\%$ of selected samples and only has $0.9\%$ classification test error. Among the rejected set, 99.7$\%$ of disagreement comes from predictions that the neural network predicts default, whereas the logistic regression predicts non-default. Among the rejected set, the neural network shows tremendous improvements over the logistic regression, with $40.4\%$ classification test error compared to $59.4\%$ for the logistic regression. The $19\%$ improvement suggests that neural network should be applied to the rejected set. 

For global explanations, the feature importance of the Selected Net based on sensitivity analysis is given in Figure~\ref{fig:global_explain_GMSC}. Note that the variables $x_7$, $x_9$, and $x_3$ are the most important ones. Cumulatively, they explain $80\%$ of the variance. Therefore, we focus on these variables. 

\begin{figure}[h]
    \centering
    \includegraphics[scale=0.5]{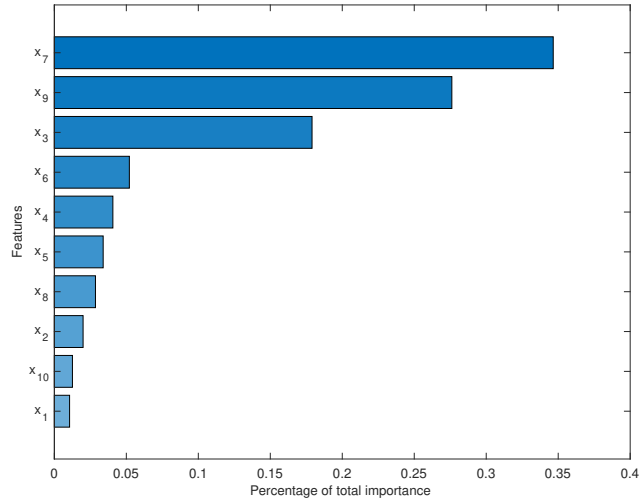}
    \caption{Global features importance of Difference Net of the Kaggle dataset.}
    \label{fig:global_explain_GMSC}
\end{figure}

For local instance-based explanations, we combine the output from the logistic regression variable. In Figure~\ref{fig:local_explain_GMSC}, we show the rejected set with respect to the logistic regression output and $x_7, x_9, x_3$ separately. The result clearly indicates that, for overall lower-risk customers determined by logistic regression, if these variables take large values, then the overall risk is significantly underestimated, as pointed out by the neural network. Therefore, recommendations to borrowers will be to reduce the number of past dues as represented by these variables. To understand why these variables fail to be captured by the logistic regression, we plot their logit functions in1 Figure~\ref{fig:nonlinearity}. We observe diminishing marginal effects. Taking $x_7$ as an example, there will be a huge difference in the probability between cases that the payment is never past due once. However, for these past due cases, whether it is four or five times has little difference. This effect introduces non-linearity to the model and therefore cannot be explained by the logistic regression.
Overall, qualitatively, we obtain results similar to the Taiwan dataset. 

\begin{figure}[h]
    \centering
    \includegraphics[scale=0.6]{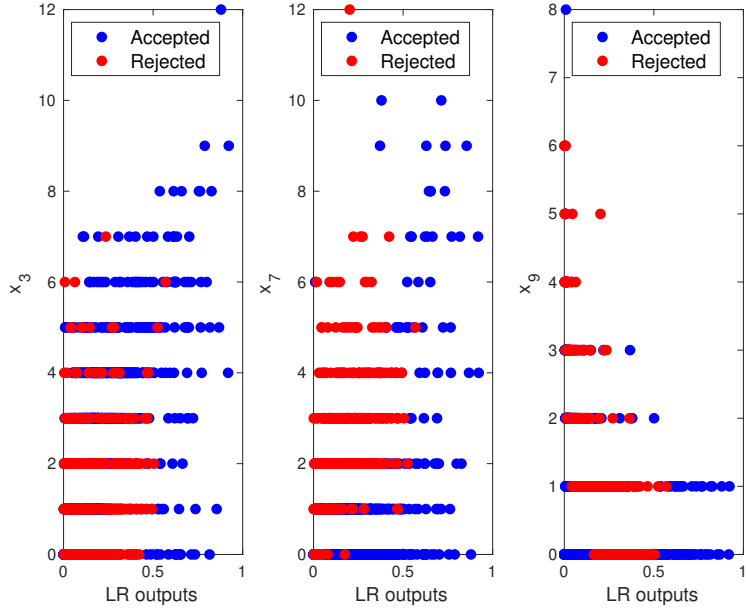}
    \caption{Predicted rejected samples for test data with respect to the logistic regression output and $x_3, x_7, x_9$ of the Kaggle dataset.}
    \label{fig:local_explain_GMSC}
\end{figure}

\begin{figure}[h]
    \centering
    \includegraphics[scale=0.6]{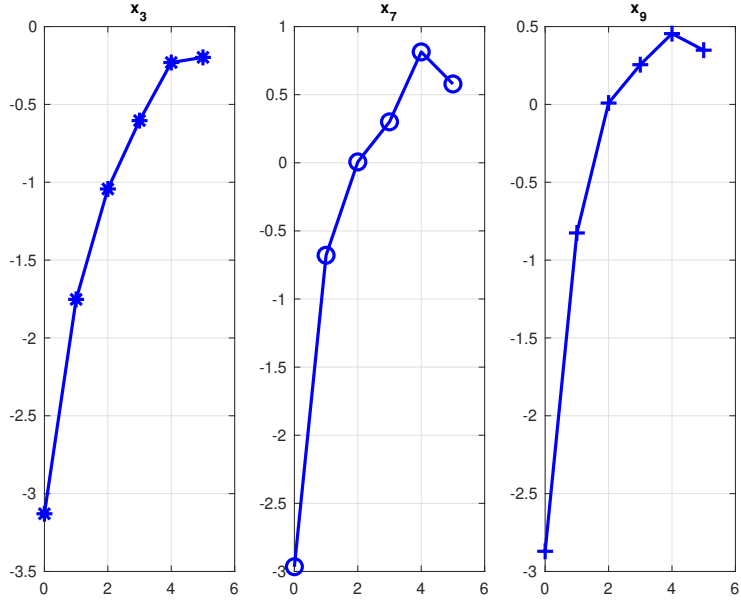}
    \caption{Non-linearities of logit functions of $x_3,x_7,x_9$ in the Kaggle dataset.}
    \label{fig:nonlinearity}
\end{figure}

\section{Conclusions}

In this paper, we study the usage of the machine learning method in credit scoring. 
As introduced, some complicated non-linear machine learning methods have better predictive power; however, they are considered black-box structures without the good interpretability required by financial regulators. 
As a consequence, they have not been widely adopted in credit scoring.
To resolve this issue, we introduce a neural network with the selective option to distinguish whether the datasets can be explained by the linear models or not.
For the portion of the datasets that cannot explained by the linear model well, our learning model can feed the data into more complex machine learning methods.
According to our model, we observe that, for most of the datasets, the logistic regression will be sufficient to achieve reasonably good accuracy; meanwhile, for some specific data portions, a shallow neural network model leads to much better accuracy without a significance sacrifice in interpretability.
We show that machine learning does have better predictability than naive logistic regression. However, there is a compromise wherein the black-box machine learning method would lose its interpretability. Therefore, practitioners have been hesitant to adopt them. We propose a novel Selective Net, which can identify the data where the simple logistic regression fails. We show that, for most of the dataset, the logistic regression has been very useful. There is only a small amount of the dataset where the logistic regression would fail. Using the Difference Net, it is recommended that practitioners should still use the logistic regression for most cases, but they should switch to the neural network for specific regions. 
For future study, one possible direction is to generalize the selective learning framework to other complicated machine learning methods.
Another potential direction of research would be to extend our method to other finance applications, including fraud detection and anti-money laundering.
In these areas, interpretability is essential.


\section*{Supplementary}

\begin{proof}[Proof of Lemma~\ref{lemma:solution}]

As the neural network has the true output $f(\*x)$, the probability of the difference between the logistic regression and the neural network is $|f(\*x)-g(\*x)|$. Therefore, the acceptance rate at $\*x$ is $1-|f(\*x)-g(\*x)|$.

\end{proof}

\begin{proof}[Proof of Lemma ~\ref{lemma:concentration-sample-train}]  
From Hoeffding's inequality, for any $\epsilon_1 > 0$, 
\begin{align}
\label{eqn:selective-net-dummy01}
    \mathcal{P}\left(\left| \frac{1}{n_{\mathcal{X}}}\sum_{i=1}^{n_{\mathcal{X}}}\mathcal{I}_{\mathbf{x}_i} - \mathbb{E}\left[\frac{1}{n_{\mathcal{X}}}\sum_{i=1}^{n_{\mathcal{X}}}\mathcal{I}_{\mathbf{x}_i}\right] \right| \geq \epsilon_1 \right) \leq 2 e^{-2n_{\mathcal{X}} \epsilon_1^2}
\end{align}
where $\mathbb{E}\left[\frac{1}{n_{\mathcal{X}}}\sum_{i=1}^{n_{\mathcal{X}}}\mathcal{I}_{\mathbf{x}_i}\right]$ is equal to the universal rejected rate $\gamma$. 
From the definition of $\gamma_{\mathcal{X}}$ in the paper, (\ref{eqn:selective-net-dummy01}) is equivalent to
\begin{align}
\label{eqn:selective-net-dummy02}
    \mathcal{P}(\left\| \gamma_{\mathcal{X}} - \gamma \right\| \geq \epsilon_1)\leq 2 e^{-2n_{\mathcal{X}} \epsilon_1^2}
\end{align}
This gives the concentration between training $\gamma_{\mathcal{X}}$ and universal $\gamma$.

\end{proof}

\begin{proof}[Proof of Lemma~\ref{lemma: concentration-sample-test}]
Same as the proof of Lemma \ref{lemma:concentration-sample-train}, applying Hoeffding's inequality on $\mathcal{W}$ and with the definition of $\gamma_{\mathcal{W}}$ in the paper; then, for any $\epsilon_2 \ge 0$,
\begin{equation}
    \label{eqn:selective-net-dummy04}
    \mathcal{P}( \| \gamma - \gamma_{\mathcal{W}}\| \ge \epsilon_2) \le 2 e^{-2n_{\mathcal{W}}\epsilon_2^2 }
\end{equation}
Apply the union bound on (\ref{eqn:selective-net-dummy02}) and (\ref{eqn:selective-net-dummy04});
then, with probability of more than $1-2 e^{-2n_{\mathcal{X}}\epsilon_1^2} -2e^{-2n_{\mathcal{Z}}\epsilon_2^2 }$, both of the following inequalities satisfy at the same time
\begin{align*}
    & \| \gamma_{\mathcal{X}} - \gamma \| < \epsilon_1,\\
    & \| \gamma - \gamma_{\mathcal{W}} \| < \epsilon_2.
\end{align*}
Use the triangle inequality, then
\begin{align*}
    \| \gamma_{\mathcal{X}} - \gamma_{\mathcal{W}}  \| \leq  \| \gamma_{\mathcal{X}} - \gamma \| + \| \gamma-\gamma_{\mathcal{W}} \| < \epsilon_1+\epsilon_2
\end{align*}
with probability more than $1-2 e^{-2n_{\mathcal{X}}\epsilon_1^2} -2e^{-2n_{\mathcal{W}}\epsilon_2^2 }$ as desired.

\end{proof}

\bibliography{chen}
\bibliographystyle{plain}

\end{document}